\patchcmd{\appendix}{\@Alph}{\@Roman}{}{}
\setlist[enumerate,1]{label=(\arabic*)}   
\newcommand{\mcal}{\mathcal}
\renewcommand{\epsilon}{\varepsilon}
\newtheorem{lemma}{Lemma}
\newtheorem{proposition}{Proposition}
\DeclareMathOperator*{\argmax}{\arg\!\max}
\DeclareTextFontCommand{\emph}{\slshape}
\title{Quantity Limits on Addictive Goods}
\author{Eric Gao\thanks{Department of Economics, Stanford University. egao2@stanford.edu.}}
\date{\today}
\begin{document}
	
\maketitle

\begin{abstract}
Addiction is a major societal issue leading to billions in healthcare losses per year. Policy makers often introduce ad hoc quantity limits---limits on the consumption or possession of a substance---something which current economic models of addiction have failed to address. This paper enriches \cite{bernheim_2004_addiction}'s model of addiction driven by cue-triggered decisions by incorporating endogenous choice of \textit{how much} of the addictive good to consume, instead of just whether or not consumption happens. Stricter quality limits improve welfare as long as they do not preclude the myopically optimal level of consumption.
\end{abstract}
\noindent \bigskip \bigskip \textbf{JEL Codes}: D01, D11, I18.

\onehalfspacing

\newpage

\section{Introduction and Motivation}

Misuse of addictive substances costs the United States billions in healthcare spending (\cite{peterson_2021_assessment}) and over a hundred thousand deaths every year (\cite{_2023_drug}). While there has been much work in economics dedicated to analyzing this issue and evaluating policy proposals, such inquires have focused on full criminalization versus legalization and the impact of taxes or subsidies. In the real world, however, policies often feature limits on the quantity an individual can possess: California allows possession of up to 28.5 grams of recreational cannabis, while most states where cannabis is legal have possession limits of around 1-3 ounces.\footnote{See \cite{pacula_2021_current} for a more detailed breakdown of policies by area.} Similarly, California recently passed laws prohibiting bartenders from serving alcohol ``to anyone who is obviously intoxicated'' (\cite{intoxication}). Not only that, but the California law also prohibits alcohol from being served to anyone ``who has lost control over his or her drinking''.\footnote{This can be thought of as a ``higher-order'' or ``proactive'' quantity limit, but falls slightly beyond the scope of this paper.} 

What are the behavioral implications of such limits, and how does that spill over to impact public welfare? While limits on the quantity of addictive goods someone can possess  restricts how much they consume at once, it can also create a perception that addiction ``isn't too bad'', since the worst that can happen is consumption at the limit. As such, quantity limits may may counter-productively increase overall usage by increasing individuals' rationally optimal quantity consumed (while decreasing compulsive consumption). For instance, consider a recovering alcoholic who doesn't go to bars due to fear of losing control, drinking too much, and ending up in the hospital. If a law were passed preventing bartenders from serving drunk customers, they may now venture into bars, thinking that even if \textit{they} lose control, the bartender will exercise restraint on their behalf.

With this in mind, suppliers of addictive goods may find it beneficial to impose their own limits on how much a customer can purchase at once to increase the number of customers (at the cost of each individual customer consuming less if the limit is binding). One example of this could be table limits in casinos restricting how much an individual can bet at a time (but the common explanation of why table limit exists is to protect the casino itself from risk). If quantity limits both increase welfare and are profitable for suppliers (perhaps for goods where continuous low usage has negligible harms, but overuse leads to enormous costs), supply-side restrictions at the store-by-store level can be a useful policy tool. The remainder of this paper formalizes these ideas and proposes an experimental evaluation of quantity limits.

\subsection{Related Literature}

This work is most closely related to \cite{bernheim_2004_addiction}. Their seminal paper is the first to incorporate a psychological foundation of rational consumption of addictive goods. Addictive substances are addictive because they directly trigger a response in the brain's mesolimbic dopamine system which regulates feelings of happiness. When certain environmental cues are presented in conjunction with stimulation of the mesolimbic dopamine system, laboratory experiments have shown that the brain learns to associate dopamine with the cues themselves opposed to eventual rewards. As such, \cite{bernheim_2004_addiction} model the process of compulsive consumption through individuals entering a ``hot'' state after being triggered by some environmental. 

In addition to introducing the novel model, the authors derive comparative statics on how lifestyle choices that bring different levels of utility, but also different levels of exposures to cues, change as model primitives change. Instead of having consumption of the addictive good being a binary choice, the individual's action space is enriched to being a continuous interval. Being less interested in the dynamics associated with lifestyle choice, my model only has two possible lifestyle activities opposed to three. Unfortunately, there is not much work utilizing models with cue-triggered decision-making beyond \cite{bernheim_2004_addiction}.

Alternative models of rational addiction exist. \cite{gul_2007_harmful} considers a similar model where decision-makers sometimes make compulsive choices. However, ``compulsive consumption'' in their model comes from goods having some temptation utility opposed to triggered decisions due to environmental cues. The main drawback of such a model is that it becomes difficult to conduct positive welfare analysis: when preferences are not consistent, should temptation utility be considered part of the decision-maker's well-being? 

Another class of rational addiction models stems from \cite{becker_1988_a}; recently, \cite{allcott_2022_digital} studies overuse of social media as addiction to digital goods while \cite{hussam_2022_rational} investigates habit formation behind regular hand-washing using a similar model. Addiction in these models comes from two sources: temptation utility as discussed previously, and higher marginal utility from consuming the good as prior consumption increases. However, someone with a higher history of past consumption might not necessarily enjoy consumption more today (except through being more tempted). Furthermore, this paper does not rule out interdependence between current utility from consumption and past consumption; higher sensitivity to quantity consumed as prior consumption increases can be viewed as a special case of my model. While \cite{allcott_2022_digital} considers the impact of a quantity limit experimentally, the limit is self-prescribed (participants could choose how much screen time to give themselves). This paper focuses on the impact of externally-imposed limits.   

Other papers such as \cite{irvine_2011_toxic} study the welfare implications of smoking and bans on smoking. Their theoretical model of how smoking influences smoker utility can help inform the shape of utility functions in my model while their empirical estimations can help inform population parameters that can come into play when looking at aggregate welfare or total profits of firms selling addictive goods. \cite{decicca_2022_the} provides an overview of different policies that seek to regulate tobacco, while \cite{levy_2018_tobacco} analyzes how regulatory agencies have analyzed consumer surplus under different policies. \cite{odermatt_2015_smoking} empirically analyzes the impact of taxes on smoking, finding that higher prices reduce life satisfaction of habitual smokers but increases the well-being of smokers who are trying to quit.

Many reviews of the different behavioral components that influence gambling decisions exist, such as \cite{gainsbury_2018_behavioral}. Similarly, \cite{stetzka_2021_how} evaluates how rational gambling behavior is, evaluating multiple different potential explanations of why individuals choose to gamble. \cite{cameron_2022_an} develops a two-period model of gambling behavior, where gambling brings positive utility in the first period but cravings for gambling take over and gambling becomes detrimental in the second period. In terms of policies aimed at mitigating the harms of compulsive gambling, \cite{broda_2008_virtual} analyzes the impact of self-imposed betting limits on an online gambling site, finding that (1) most users do not violate their self-imposed limits, and (2) those that do go above self-imposed limits do not suffer from poor outcomes and bet rationally. Similarly, \cite{badji_2023_economic} investigates the welfare effects of increased gambling availability and find that financial and mental well-being decrease as proximity to gambling venues increase.

\section{Model}

The model largely follows \cite{bernheim_2004_addiction}. There is a decision-maker (DM) that operates in an infinite-horizon discrete time setting who discounts the future at rate $\delta$. At every period the DM observes their addictive state $s \in \mathbf{R}$ and choose a lifestyle activity from $A = \{E, R\}$ where $E$ denotes exposure to the addictive good and $R$ denotes rehabilitation. If $R$ is chosen, the DM needs to pay a monetary cost of $r(s)$ for some continuous $r$, no amount of the addictive good is consumed, and the period ends. If $E$ is chosen, the DM can then choose to purchase some quantity $q \in [0, Q]$ of the addictive good at a price $p$ per unit, spending a total of $pq$. 

With some probability depending on the history of past consumption, the DM may enter a triggered state if $E$ is chosen and compulsively consumes $q = Q$ of the addictive good. To model this, the DM's past consumption determines their addictive state $s$. At state $s$, consumption of $q$ leads to state $\lambda s + q$ in the next period, for some $\lambda \in (0,1)$. There are two consequences of this setup: 
\begin{enumerate}
    \item Addictive state is upper bound by $\overline{S} = Q/(1-\lambda)$: If $s \leq Q/(1-\lambda)$, then consumption of $Q$ leads to an addictive state of
    $$\lambda s + Q \leq \lambda \frac{Q}{1-\lambda} + Q = \frac{(\lambda+(1-\lambda))Q}{1-\lambda} = \frac{Q}{1-\lambda}.$$
    \item If the DM ever leaves the state $s = 0$, they can never return: At any $s > 0$, we have that $\lambda s + q \geq \lambda s > 0$.
\end{enumerate}
With (1) in mind, given some arbitrary upper bound on upper bounds $\overline{Q}$, let the set of possible state-quantity limit pairs be
$$\mcal D = \{(s,Q): 0 \leq Q \leq \overline{Q}, 0 \leq s \leq Q/(1- \lambda)\}.$$
For finite $\overline{Q}$, we have that $\mcal D$ is a compact subset of $\mathbb{R}^2$. For all intents and purposes, $\overline{Q}$ can be taken to be large enough to encompass the relevant region of quantity limits. Incorporating the quantity limit into the state space and restricting attention to a a compact subset will be useful for applying dynamic programming results when deriving comparative statics.

The DM's perceived attractiveness of consuming the addictive good additionally depends on some cue (such as advertising) $\omega \in \Omega$ drawn by nature according to distribution $\mu$. Then, $M: A \times S \times \Omega \to \mathbb{R}$ measures the perceived attractiveness of consuming the addictive good. We assume that:
\begin{enumerate}
    \item No attractiveness in rehabilitation: $M(R,s, \omega) = 0$ for all $s, \omega$.
    \item No attractiveness if no history: $M(a, 0, \omega) = 0$ for all $a, \omega$. 
    \item Higher prior use leads to additional perceived attractiveness: $M(E, s, \omega)$ is increasing in $s$ for $s \in [0, \overline{S}]$.
\end{enumerate}
The DM enters the triggered state if $M(a, s, \omega)$ is larger than some threshold $M^T$. Let $\mcal C(a, s) = \{\omega \in \Omega: M(a, s, \omega) \geq M^T\}$ be the set of cues that leads to the triggered state and $\mcal P(a,s) = \mu(\mcal C(a,s))$ be the induced probability of entering into the triggered state. A direct consequence of our three assumptions on $M$ is the following: 
\begin{lemma}[Properties of $\mcal P$]
    The probability of entering the triggered state satisfies: 
    \begin{enumerate}
		\item No trigger in rehabilitation: $\mcal P(R,s) = 0$ for all $s$.
		\item No trigger if no history: $\mcal P(a,0) = 0$ for all $a$.
		\item Higher prior use leads to higher chance of trigger:  $\mcal P(E,s)$ is increasing in $s$ for $s \in [0, \overline{S}]$.
    \end{enumerate}
\end{lemma}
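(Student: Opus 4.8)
The plan is to verify each of the three properties directly from the corresponding assumption on $M$, using only the definitions $\mcal C(a,s) = \{\omega \in \Omega : M(a,s,\omega) \geq M^T\}$ and $\mcal P(a,s) = \mu(\mcal C(a,s))$ together with the fact that $\mu$ is a probability measure and hence monotone under set inclusion. Throughout I will use that the triggering threshold satisfies $M^T > 0$; if this is not imposed explicitly elsewhere it should be added as a maintained assumption, since otherwise a cue with $M = 0$ would already trigger and properties (1)--(2) would fail.

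For property (1), fix any $s$. By assumption (1) on $M$ we have $M(R,s,\omega) = 0 < M^T$ for every $\omega \in \Omega$, so $\mcal C(R,s) = \emptyset$ and therefore $\mcal P(R,s) = \mu(\emptyset) = 0$. Property (2) is handled identically: by assumption (2) on $M$, $M(a,0,\omega) = 0 < M^T$ for every $\omega \in \Omega$ and every $a \in A$, so $\mcal C(a,0) = \emptyset$ and $\mcal P(a,0) = 0$.

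The only step with real content is property (3). Take $0 \leq s_1 \leq s_2 \leq \overline{S}$. By assumption (3) on $M$, the map $s \mapsto M(E,s,\omega)$ is increasing on $[0,\overline{S}]$ for each fixed $\omega$, so $M(E,s_1,\omega) \leq M(E,s_2,\omega)$. Hence if $\omega \in \mcal C(E,s_1)$, i.e. $M(E,s_1,\omega) \geq M^T$, then also $M(E,s_2,\omega) \geq M^T$, i.e. $\omega \in \mcal C(E,s_2)$. This shows $\mcal C(E,s_1) \subseteq \mcal C(E,s_2)$, and monotonicity of the measure $\mu$ then yields $\mcal P(E,s_1) = \mu(\mcal C(E,s_1)) \leq \mu(\mcal C(E,s_2)) = \mcal P(E,s_2)$.

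The main (and only, minor) obstacle is a bookkeeping point about strict versus weak monotonicity. If "increasing" is intended strictly for $M$, the nested-sets argument still delivers only weak monotonicity of $\mcal P$ in general: for example, if $\mu$ places no mass near the level set $\{\omega : M(E,s,\omega) = M^T\}$, the triggering region need not gain positive measure as $s$ rises. So I would either read the statement as "weakly increasing" (the natural convention here, consistent with the hypothesis on $M$), or invoke an additional regularity condition—continuity of $M(E,\cdot,\omega)$ together with $\mu$ having full support on $\Omega$—under which the triggering region grows by a set of positive $\mu$-measure and $\mcal P(E,\cdot)$ is strictly increasing wherever it is interior. I would flag this explicitly; otherwise the argument above is immediate.
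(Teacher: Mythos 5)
Your proof is correct and is exactly the direct verification the paper intends; the paper states the lemma as an immediate consequence of the assumptions on $M$ and omits the argument entirely. Your two flagged caveats---that $M^T > 0$ must be maintained for properties (1)--(2) and that the nested-sets argument only delivers weak monotonicity of $\mcal P$---are both legitimate and consistent with how the paper implicitly uses the lemma (only weak monotonicity of $\mcal P$ is ever invoked, e.g.\ in Lemma \ref{noninc}).
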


In each period, the DM receives income $y(s)$ that depends continuously on their addictive state. Their payoff in each state is determined by their addictive state, income, quantity of addictive good purchased, and activity captured in the function $u: S \times \mathbb{R} \times [0, Q] \times A \to \mathbb{R}$. In a given period, if the DM chooses rehabilitation, their payoff is
$$u(s, y(s) - r(s), 0, R).$$
If the DM chooses exposure and consumption of $q$ units of the addictive good, with probability $1-\mcal P(E,s)$ they do not enter the triggered state and collect a payoff of
$$u(s, y(s) - pq, q, E)$$
and with probability $\mcal P(E,s)$ they do enter the triggered state and collect a payoff of 
$$u(s, y(s) - pQ, Q, E).$$
Let $V: \mcal D \to \mathbb{R}$ be the DM's value function of their stochastic programming problem given their current addictive state is $s$ and the quantity limit $Q$ (recall that $\mcal D$ is the relevant domain of possible additive state-quantity limit pairs). With this, we can define 
\begin{equation*}
    \begin{split}
    q^*(s,Q) = \argmax_{q \in [0, Q]} & \Big\{(1- \mcal P(E,s)) [u(s, y(s) - pq, q, E) + \delta V(\lambda s+q, Q)] \\
    & + \mcal P(E,s)[u(s, y(s) - pQ, Q, E) + \delta V(\lambda s+Q, Q)] \Big\}.
    \end{split}
\end{equation*}
to be the DM's optimal level of consumption of the addictive good conditional on choosing exposure and at addictive state $s$. As 
$$1- \mcal P(E,s) \text{ and }\mcal P(E,s)[u(s, y(s) - pQ, Q, E) + \delta V(\lambda s + Q, Q)]$$
do not depend on $q$, we can equivalently define
$$q^*(s,Q) = \argmax_{q \in [0, Q]} u(s, y(s) - pq, q, E) + \delta V(\lambda s+q, Q).$$
If the DM decides to choose an activity of $E$, they will always choose $q^*(s)$ as their level of consumption of the addictive good. Thus, the value function solves
\begin{equation*}
    \begin{split}
        V(s, Q) = \max \Bigg\{ &(1-\mcal P(E,s))[u(s,y(s)-pq^*(s),q^*(s), E) + \delta V(\lambda s + q^*(s), Q)] \\ 
        &+\mcal P(E,s)[u(s,y(s)-pQ,Q, E) + \delta V(\lambda s + Q, Q)], \\
        &u(s,y(s),0, R) + \delta V(\lambda s, Q) \Bigg\}.
    \end{split}
\end{equation*}

A crucial note about the model is that addictive dynamics only \textit{necessarily} depend on the probability the DM compulsively consumes. In particular, this means that we do not require DMs who have a higher period of consumption to be more sensitive to the quantity of the addictive good consumed. There are many cases of individuals who want to stop using an addictive substance and recognize that usage is harmful, yet are unable to stop. However, our model can incorporate this case: If more addicted DM's are more sensitive to the quantity of the good consumed, that can be modeled by assuming that
$$\left|\frac{\partial}{\partial q}u(s, y(s) - pq, q, E)\right|$$
is increasing in $s$. 

Going forward, let
\begin{equation*}
    \begin{split}
        \mcal P(s) &= \mcal P(E,s); \\
        u(s, q) &= u(s,y(s)-pq,q, E); \\
        u(s, Q) &= u(s,y(s)-pQ,Q, E); \\
        u(s, R) &= u(s,y(s)-r(s),0, R)
    \end{split}
\end{equation*}
to simplify notation whenever there is no ambiguity.

\section{Results}

The main result is that if the DM dislikes being at a higher addictive state, then tightening the quantity limit increases their utility, as long as it never infringes on myopically optimal consumption. This extends the standard behavioral intuition that helping the DM commit to not taking actions they do not want to take may make them better off into this setting. We will work towards the following formal result: 

\begin{proposition}[Properties of the Value Function]\label{value}
The value function satisfies the following:
    \begin{enumerate}
        \item If $u$ is continuous in $s$ and $q$, then $V$ is continuous in $s$ and $Q$;
        \item If $u$ is decreasing in $s$, then $V$ is decreasing in $s$ for all $Q$;
        \item Suppose that in addition to (2), there exists $\overline{q}$ such that $u(s, y(s)-pq, q,E)$ is decreasing in $q$ for all $q \geq \overline{q}$. Then, $V$ is increasing in $Q$ for all $s$ and $Q \geq \overline{q}$;
        \item If $u$ is (strictly) concave in $s$, then $V$ is (strictly) concave in $s$.
    \end{enumerate}
\end{proposition}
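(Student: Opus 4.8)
The plan is to treat all four parts through the Bellman operator and the fact that it contracts. Define $T$ on bounded $W:\mcal D\to\mathbb{R}$ by
\begin{equation*}
\begin{split}
(TW)(s,Q)=\max\Bigl\{\,&\mcal P(s)\bigl[u(s,Q)+\delta W(\lambda s+Q,Q)\bigr]\\
&+(1-\mcal P(s))\max_{q\in[0,Q]}\bigl[u(s,q)+\delta W(\lambda s+q,Q)\bigr],\ \ u(s,R)+\delta W(\lambda s,Q)\Bigr\}.
\end{split}
\end{equation*}
Since $\mcal D$ is compact and the primitives are bounded, $T$ maps bounded functions to bounded functions; it is monotone ($W$ enters only through the nonnegative coefficients $\delta,\delta\mcal P(s),\delta(1-\mcal P(s))$) and satisfies discounting ($T(W+c)=TW+\delta c$), so by Blackwell's conditions it is a contraction of modulus $\delta$ with unique fixed point $V$. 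For each claimed property I would exhibit a \emph{closed} subset $\mcal S$ of the space of bounded functions (under the sup norm) with $T(\mcal S)\subseteq\mcal S$; the fixed point then lies in $\mcal S$. Note that $q^{\ast}$ is only the inner $\argmax$ inside $T$, so there is no circularity in the continuation.

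\textbf{Parts (1) and (2).}
For (1) take $\mcal S$ to be the bounded continuous functions. The constraint correspondence $Q\mapsto[0,Q]$ is continuous and compact-valued, so Berge's maximum theorem gives continuity of $\max_{q\in[0,Q]}[u(s,q)+\delta W(\lambda s+q,Q)]$ in $(s,Q)$ whenever $u$ and $W$ are continuous; the remaining terms are continuous (using continuity of $y,r$ and of $\mcal P$, the latter taken as a regularity condition or obtained from $M$ continuous and $\mu$ atomless), and the outer max preserves continuity. For (2) take $\mcal S$ to be functions decreasing in $s$ for each $Q$. Writing $A(s)=\max_{q\in[0,Q]}[u(s,q)+\delta W(\lambda s+q,Q)]$ and $B(s)=u(s,Q)+\delta W(\lambda s+Q,Q)$, feasibility of $q=Q$ gives the key inequality $A(s)\ge B(s)$. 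Each of $A,B$ is decreasing in $s$ (for $B$ directly; for $A$ because the integrand is decreasing in $s$ for every fixed $q$ and the pointwise max of decreasing functions is decreasing), and the rehab branch is decreasing in $s$ as well. The only subtlety is the trigger weighting: for $s_1<s_2$,
\begin{equation*}
\begin{split}
(1-\mcal P(s_1))A(s_1)+\mcal P(s_1)B(s_1)&\ \ge\ (1-\mcal P(s_2))A(s_1)+\mcal P(s_2)B(s_1)\\
&\ \ge\ (1-\mcal P(s_2))A(s_2)+\mcal P(s_2)B(s_2),
\end{split}
\end{equation*}
where the first inequality uses $A(s_1)\ge B(s_1)$ with $\mcal P$ increasing (shifting weight from the larger term $A$ onto the smaller term $B$) and the second uses that $A,B$ are decreasing; hence the exposure branch is decreasing, and so is the outer max.

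\textbf{Part (3).}
Having shown $V$ is decreasing in $s$, I would restrict to functions that are decreasing in $s$ and increasing in $Q$ on $\{Q\ge\overline q\}$ and show $T$ preserves this class. Decreasingness in $s$ together with $u(s,\cdot)$ decreasing beyond $\overline q$ forces the interior optimum $q^{\ast}(s,Q)\le\overline q\le Q$, so on $\{Q\ge\overline q\}$ the interior maximization is effectively unconstrained and transmits the $Q$-increasingness of $W$ through the continuation $W(\cdot,Q)$; the rehab branch likewise inherits it through $W(\lambda s,Q)$. The delicate term is the forced-consumption value $u(s,Q)+\delta W(\lambda s+Q,Q)$: its flow payoff $u(s,y(s)-pQ,Q,E)$ is \emph{decreasing} in $Q$ on $\{Q\ge\overline q\}$ by hypothesis, which pushes against the asserted $Q$-increasingness, so the argument must show that the change in the triggered continuation $\delta W(\lambda s+Q,Q)$ offsets this loss. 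Reconciling the forced-consumption term with the asserted monotonicity is the principal difficulty of this part.

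\textbf{Part (4) and the main obstacle.}
For (4) take $\mcal S$ to be functions concave in $s$ for each $Q$. The transition $(s,q)\mapsto\lambda s+q$ is affine, so $u(s,q)+\delta W(\lambda s+q,Q)$ is jointly concave in $(s,q)$ whenever the per-period return is, and the partial maximum over the convex set $[0,Q]$ is then concave in $s$; the forced-consumption and rehab terms are concave in $s$ directly. The step I expect to be the main obstacle is preserving concavity through the two aggregation steps that follow. First, the exposure branch is the combination $(1-\mcal P(s))A(s)+\mcal P(s)B(s)$ with an $s$-\emph{dependent} weight $\mcal P(s)$, so concavity is not automatic and must be extracted from the curvature of $\mcal P$ (equivalently, by treating the trigger as a concavity-preserving stochastic transition). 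Second, and more seriously, the Bellman update ends in a maximum over the discrete lifestyle choice $\{E,R\}$, and the pointwise maximum of two concave functions is in general not concave; establishing that the upper envelope of the exposure and rehab branches is concave in $s$ is the crux, and is where I expect to need the structural assumptions of the model (e.g.\ that one branch dominates over the relevant range, or a single-crossing/curvature condition at the switch point) rather than a purely mechanical propagation argument.
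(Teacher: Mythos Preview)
Your overall architecture---show $T$ is a Blackwell contraction and that each property defines a closed, $T$-invariant subset of bounded functions---is exactly the paper's approach, and your treatments of (1) and (2) (Berge for continuity; the $A(s)\ge B(s)$ observation combined with monotonicity of $\mcal P$ for the weight-shifting step) match the paper essentially line for line.

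On (3), the sign of the monotonicity is reversed. The paper's lemma actually establishes that $TW$ is \emph{non-increasing} in $Q$ for $Q\ge\overline q$ (consistent with the abstract's claim that stricter limits raise welfare; the proposition as stated has a sign slip). Once you flip the direction, the tension you flagged in the forced-consumption term $u(s,Q)+\delta W(\lambda s+Q,Q)$ dissolves: with $W$ non-increasing in both arguments and $u(s,\cdot)$ non-increasing beyond $\overline q$, every piece of the Bellman operator is non-increasing in $Q$ on $\{Q\ge\overline q\}$. The paper handles the inner max by noting that since both $u$ and $W$ decrease past $\overline q$, the maximizer satisfies $q^\ast(s,Q)\le\overline q$ for any $Q\ge\overline q$; hence $q^\ast(s,Q')\in[0,\overline q]\subseteq[0,Q]$ is feasible at the smaller limit $Q<Q'$, and one compares directly at that point.

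On (4), the two obstacles you isolate---the state-dependent mixture weight $\mcal P(s)$ and the outer $\max$ over $\{E,R\}$---are real, and the paper's own proof does not resolve them. Its concavity lemma treats only the exposure branch, and the chain of inequalities terminates with the weight $\mcal P(\gamma s+(1-\gamma)s')$ still in place rather than reaching $\gamma\,TW(s,Q)+(1-\gamma)\,TW(s',Q)$; the outer maximum over lifestyles is never addressed. So your diagnosis is accurate: without additional structure (for instance, curvature of $\mcal P$ paired with the sign of $A-B$, and a single-crossing or dominance argument at the $E/R$ switch), concavity does not propagate mechanically through $T$, and the paper does not supply that structure either.
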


All three results follow from standard dynamic programming techniques. Given any function $W: \mcal D \to \mathbf{R}$, define the ``value iteration'' functional as follows:
\begin{equation*}
    \begin{split}
        (TW)(s, Q) = \max \Bigg\{ &\max_{q \in [0,Q]}\Big\{(1-\mcal P(s))[u(s,q) + \delta W(\lambda s + q, Q)] \\ 
        &\hspace{3em}+\mcal P(s)[u(s,Q) + \delta W(\lambda s + Q, Q)]\Big\}, \\
        &u(s, R) + \delta W(\lambda s, Q) \Bigg\}.
    \end{split}
\end{equation*}
Then, the value function $V$ is a fixed point of $T$, so $V = TV$. Value iteration preserves each of the three properties, so any fixed point must also have the three properties.

From a policy perspective, this result establishes that if society is confident that no rational individual would optimally consume above some quantity of good, then a quantity limit at that level can only increase welfare. A shortcoming of the current analysis is that we have only considered a single decision-maker: What if there were many individuals with heterogeneous preferences over the addictive good? If there were a (potentially non-binding, for some individuals who do not like the addictive good very much) uniform bound on the rationally optimal quantity consumed over all individuals, such a quantity limit would still be welfare-improving.

\subsection{Quantity Consumed}

When the DM decides how much of the addictive substance to consume, they trade off between three competing forces: Less income today due to more spending on the addictive substance, increased experiential utility from consumption itself, and lowered utility tomorrow from being at a higher addictive state. Changing the consumption limit leaves the first two effects unchanged, but can shift continuation payoffs. While it is intuitively tempting to say that a looser quantity limit magnifies the harms of addiction, this holds regardless of the addictive state. Instead, the DM cares about how changes in the consumption limit changes the \textit{marginal} harms from a higher addictive state.

\begin{proposition}[Quantity Consumed as Limit Changes]\label{quantity_limit}
    If $Q$ is greater than the myopic optimal level of consumption at state $s$, then $q^*(s,Q)$ is (weakly) increasing as $Q$ decreases if $V$ has strictly \textit{decreasing} differences in $s, Q$. In particular, a sufficient condition is for
    $$\frac{\partial^2}{\partial q \partial Q} V(\lambda s + q, Q) < 0.$$
\end{proposition}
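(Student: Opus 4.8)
The plan is to treat the determination of $q^*(s,Q)$ as a monotone comparative statics problem in the choice variable $q$ with parameter $Q$, the one twist being that $Q$ enters both the objective and the feasible set. Recall that
\[
q^*(s,Q) = \argmax_{q \in [0,Q]} g(q,Q), \qquad g(q,Q) := u(s,q) + \delta V(\lambda s + q, Q).
\]
First I would record that $g$ has (strictly) decreasing differences in $(q,Q)$: the term $u(s,q)$ does not depend on $Q$, and $V(\lambda s + q, Q)$ inherits strictly decreasing differences in $(q,Q)$ from the hypothesis that $V$ has strictly decreasing differences in $(s,Q)$, since $q \mapsto \lambda s + q$ is increasing for fixed $s$. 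In the smooth case this is exactly the stated condition $\partial^2 V(\lambda s + q, Q)/\partial q\,\partial Q < 0$, which equals $\partial^2 g/\partial q\,\partial Q$.

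The genuinely new step — and the one where the hypothesis ``$Q$ exceeds the myopic optimum'' is used — is to neutralize the dependence of the feasible set $[0,Q]$ on $Q$. Let $q^m(s) = \argmax_{q} u(s,q)$ denote the myopic optimum (taking the largest maximizer if it is not unique). I would argue that whenever $Q > q^m(s)$ the upper constraint is slack at the maximizer, so that $q^*(s,Q) = \argmax_{q \in [0, q^m(s)]} g(q,Q)$, a maximization over a \emph{fixed} interval. To see this, work in the regime the paper focuses on, where $u$ is decreasing in $s$; then $V$ is decreasing in $s$ by Proposition~\ref{value}(2), so $q \mapsto \delta V(\lambda s + q, Q)$ is non-increasing. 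Combined with $u(s,\cdot)$ being non-increasing on $[q^m(s),\infty)$ — single-peakedness of the myopic objective, e.g.\ from concavity of $u$ in $q$ — the full objective $g(\cdot,Q)$ is non-increasing on $[q^m(s),Q]$, so no point of $(q^m(s),Q]$ can be a maximizer, giving $q^*(s,Q) \in [0,q^m(s)] \subset [0,Q)$.

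With the feasible set pinned down, the conclusion follows from Topkis's monotonicity theorem: on the fixed interval $[0,q^m(s)]$, $g$ has strictly decreasing differences in $(q,Q)$, so $\argmax_{q \in [0,q^m(s)]} g(q,Q)$ is descending in $Q$ in the strong set order; hence $q^*(s,Q)$ is weakly decreasing in $Q$, i.e.\ weakly increasing as $Q$ decreases. For the differentiable restatement I would instead assume $u$ strictly concave in $q$ (so $g(\cdot,Q)$ is strictly concave, using also concavity of $V$ from Proposition~\ref{value}(4) if needed), characterize the interior optimum through the first-order condition $u_q(s,q^*) + \delta V_1(\lambda s + q^*, Q) = 0$, and differentiate implicitly to obtain $\partial q^*/\partial Q = -\,\delta\, V_{12}(\lambda s + q^*, Q)/g_{qq}$, which is negative exactly when the cross-partial $\partial^2 V/\partial q\,\partial Q$ is negative.

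I expect the main obstacle to be the second paragraph: carefully justifying that the $Q$-dependence of the feasible set can be removed. One must be precise about what ``myopic optimal level of consumption'' means, about whether the inequality $Q > q^m(s)$ is strict (it should be, to keep the constraint strictly slack), and about the minimal regularity on $u$ (single-peakedness / concavity in $q$) that makes the slackness argument go through; it also relies on being in the $V$-decreasing-in-$s$ regime, which should be flagged as a standing assumption. Once that reduction is in place, the comparative statics itself — the decreasing-differences computation plus Topkis (or the implicit function theorem in the smooth case) — is routine.
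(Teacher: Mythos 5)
Your proof is correct and its core is the same as the paper's: show the objective $u(s,q)+\delta V(\lambda s+q,Q)$ inherits strictly decreasing differences in $(q,Q)$ from the decreasing-differences hypothesis on $V$ (since $Q$ enters $u(s,q)$ not at all and enters $V$ only through its second argument), then invoke monotone comparative statics (the paper cites the Milgrom--Shannon Monotone Selection Theorem; you use Topkis). Where you genuinely go beyond the paper is the second paragraph: the paper's proof never addresses the fact that the feasible set $[0,Q]$ itself varies with $Q$, and indeed never uses the hypothesis that $Q$ exceeds the myopic optimum --- decreasing differences alone do not yield monotonicity of the argmax when the constraint set is \emph{expanding} in the parameter, so this is a real gap in the published argument. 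Your reduction to the fixed interval $[0,q^m(s)]$ (using single-peakedness of $u$ in $q$ and $V$ decreasing in $s$ from Proposition~\ref{value}(2) to show the constraint is slack) is exactly the missing step, though you are right to flag that it imports regularity on $u$ in $q$ that the proposition does not state explicitly; the paper implicitly relies on the same kind of assumption when it speaks of ``the myopic optimal level of consumption.'' Your implicit-function-theorem version of the smooth case is a useful supplement but not needed for the weak-monotonicity claim. In short: same comparative-statics engine, but your write-up repairs the constraint-set issue that the paper's proof silently skips.
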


This corresponds to the intuition that external commitment from the quantity limit is a substitute for internal commitment from choosing a lower level of rational consumption. When the quantity limit is sufficiently extreme, we are able to offer a sharper characterization. In this case, the restrictiveness of external commitment completely replaces the need for any internal commitment. 

\begin{proposition}[Consistent Consumption]\label{extreme}
    For every model specification, there exists $\underline{Q} > 0$ such that the DM chooses exposure at every addictive state.
\end{proposition}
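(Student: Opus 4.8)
The plan is to show that once the limit $\underline Q$ is small, even the degenerate ``choose exposure and buy nothing'' option weakly dominates rehabilitation at every state, which certainly forces exposure to be optimal. Fix the limit at $\underline Q$; the states are then $s \in [0,\underline Q/(1-\lambda)]$, and $\lambda s$ and $\lambda s + \underline Q$ also lie in this interval (the latter by consequence (1) of the model). Since $q = 0$ is feasible under exposure, the exposure value at $s$ is at least
\[
(1-\mcal P(s))\big[u(s,0) + \delta V(\lambda s,\underline Q)\big] + \mcal P(s)\big[u(s,\underline Q) + \delta V(\lambda s + \underline Q,\underline Q)\big],
\]
whereas rehabilitation yields $u(s,R) + \delta V(\lambda s,\underline Q)$. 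Subtracting these and rearranging, exposure is weakly optimal at $s$ as soon as
\[
u(s,0) - u(s,R) \;\ge\; \mcal P(s)\Big[u(s,0) - u(s,\underline Q) - \delta\big(V(\lambda s + \underline Q,\underline Q) - V(\lambda s,\underline Q)\big)\Big],
\]
so it suffices to produce $\underline Q>0$ making this inequality hold at every $s\in[0,\underline Q/(1-\lambda)]$.

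Next I would bound the right-hand side uniformly. On the compact set $\mcal D$ (built with the fixed outer bound $\overline Q$), $u$ is uniformly continuous and, by Proposition~\ref{value}(1), $V$ is jointly, hence uniformly, continuous. Since $u(s,\underline Q)$ and $u(s,0)$ differ only by $p\underline Q$ in the income argument and by $\underline Q$ in the quantity argument, and $(\lambda s+\underline Q,\underline Q)$ and $(\lambda s,\underline Q)$ differ by $\underline Q$ in the first coordinate, for each $\epsilon>0$ there is $\eta>0$ so that whenever $\underline Q<\eta$,
\[
\big|u(s,0) - u(s,\underline Q)\big| < \epsilon \quad\text{and}\quad \big|V(\lambda s + \underline Q,\underline Q) - V(\lambda s,\underline Q)\big| < \epsilon \qquad\text{uniformly in } s.
\]
As $0\le\mcal P(s)\le 1$, the right-hand side of the reduced inequality is then at most $(1+\delta)\epsilon$, so it is enough to guarantee $u(s,0) - u(s,R)\ge(1+\delta)\epsilon$ over the state space.

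Finally I would control $u(s,0) - u(s,R) = u(s,y(s),0,E) - u(s,y(s)-r(s),0,R)$ near $s=0$. At $s=0$ it equals $c := u(0,y(0),0,E) - u(0,y(0)-r(0),0,R)$, which is strictly positive under the model's maintained structure: rehabilitating from zero addiction is costly ($r(0)>0$) and carries no compensating utility relative to non-consumptive exposure, so an un-addicted agent strictly prefers exposure — this is the one place the argument appeals to the shape of $u$ and $r$. By continuity of $u$, $y$, $r$ in $s$ there is $\eta_1>0$ with $u(s,0)-u(s,R)\ge c/2$ for $s\in[0,\eta_1)$. Now set $\epsilon := c/(4(1+\delta))$, take the corresponding $\eta$ from the previous step, and let $\underline Q := \tfrac12\min\{\eta,\,(1-\lambda)\eta_1\}$. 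Then every state $s\in[0,\underline Q/(1-\lambda)]\subseteq[0,\eta_1)$ obeys $u(s,0)-u(s,R)\ge c/2>(1+\delta)\epsilon$, so the reduced inequality holds strictly and exposure is the strict optimum there; this $\underline Q$ works.

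The one delicate ingredient is the triggered-state contribution: with probability $\mcal P(s)$, which need \emph{not} vanish over the (small) state space, the DM is dragged to consume $\underline Q$ and land in state $\lambda s+\underline Q$, whose value could a priori be far below $V(\lambda s,\underline Q)$. The argument neutralizes this not by shrinking $\mcal P(s)$ but by shrinking $\underline Q$: a forced purchase of $\underline Q$ units is nearly a forced purchase of nothing, and $\lambda s+\underline Q$ is nearly $\lambda s$, so by the uniform continuity of $u$ and of $V$ — the latter supplied by Proposition~\ref{value}(1) — the whole bracket is $O(\underline Q)$ irrespective of the trigger probability. Everything else is bookkeeping with uniform continuity on $\mcal D$, together with the observation that tightening the limit simultaneously renders the at-most-$\underline Q$ purchase innocuous and confines the states to a shrinking neighborhood of the harmless un-addicted state $s=0$.
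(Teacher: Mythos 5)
Your proof is correct, and it follows the same broad strategy as the paper's --- for small $\underline Q$, the exposure option is uniformly close to ``exposure with zero consumption,'' which strictly beats costly rehabilitation --- but the mechanics differ in two ways worth noting. First, the paper lower-bounds the exposure value by substituting $q = Q$, which collapses the triggered and untriggered branches into the single term $u(s,Q) + \delta V(\lambda s + Q, Q)$ and so removes $\mcal P(s)$ from the argument entirely; you instead substitute $q = 0$ and then neutralize the triggered term by uniform continuity of $u$ and $V$ on the compact domain $\mcal D$, using $0 \le \mcal P(s) \le 1$. Both substitutions are valid lower bounds on the max, and yours has the expository advantage of making explicit why the trigger probability is harmless (the forced purchase is $O(\underline Q)$), which the paper leaves implicit. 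Second, the paper establishes the strict gap $\epsilon = \min_s\{u(s,y(s),0,E) - u(s,y(s)-r(s),0,R)\} > 0$ over the \emph{entire} interval $[0,\overline Q/(1-\lambda)]$ and then runs a sequence-and-contradiction argument with $Q_k \to 0$; you exploit the fact that the reachable state space $[0,\underline Q/(1-\lambda)]$ shrinks to $\{0\}$ as the limit tightens, so you only need the gap at $s = 0$ and a continuity neighborhood around it --- a strictly weaker hypothesis and a more direct (constructive) choice of $\underline Q$. Both proofs rest on the same unstated model assumption, which you correctly isolate: that zero-consumption exposure strictly dominates rehabilitation (i.e., $r > 0$ and $u$ increasing in income), without which the $\epsilon$ (or your $c$) need not be positive. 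Your handling of the domain constraints ($\lambda s + \underline Q \le \underline Q/(1-\lambda)$ so all evaluated points lie in $\mcal D$) is also cleaner than the paper's, which evaluates $V(\lambda s, 0)$ in places where $V(\lambda s, Q)$ is meant.
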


Unfortunately, comparative statics as to how quantity consumed changes as addictive state changes are harder to pin down. In particular, the curvature of the DM's value function with respect to addictive state influences their choice of how much to consume. As such, it is difficult to pin down how consumption changes as addictive state changes purely in terms of model primitives.

\subsection{Numerical Solution}

Consider the following parameterization of the model. Take $\lambda = 0.8$. Let the DM's within-period utility from choosing exposure and consuming $q$ be
$$u(s,q) = \sqrt{100-s} - q +q(10-q).$$
This could potentially represent a DM with utility that is quasilinar in income and addictive state, enjoys experiential utility of $\sqrt{100-s}$ from being at a given addictive state, pays a price of one for the addictive good, and consumption utility of $q(10-q).$ If continuation payoffs are inconsequential, then the in-period optimum is to consume $q^* = 4.5$ regardless of the addictive state. The probability the DM enters the hot state is $\mcal P(s) = \sqrt{s}/15$.\footnote{A range of denominators were tried. If the probability is too low, the DM always consumes. If the probability is too high, the DM always chooses rehabilitation. In general, these functional forms were chosen to provide a clean visualization of general results.}

The simplified model is solved using value iteration in Matlab. Value functions are as follows, and confirms the result that tightening the quantity limit improves the DM's utility, as long as it does not restrict the myopically optimal level of consumption:

\begin{figure}[h]
    \centering
    \includegraphics[width = 0.8\textwidth]{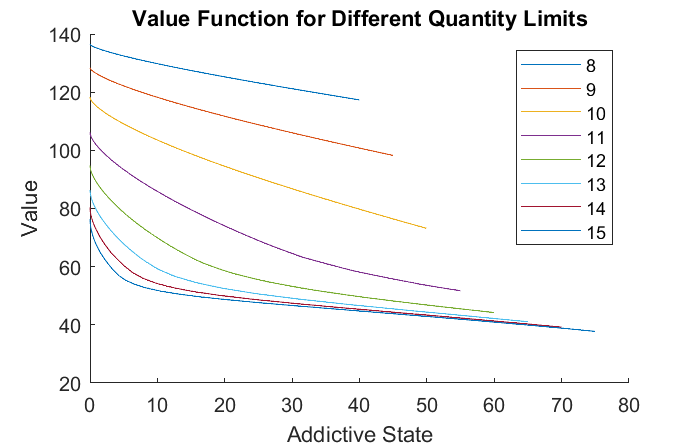}
    \caption{Value Function at Different Quantity Limits}
\end{figure}

Next, we plot the DM's policy function, where a quantity choice of $-1$ corresponds to rehabilitation. Trembles in the DM's consumption choice are most likely due to the solution being numeric, as opposed to being a consequence of the model. 

\begin{figure}[h!]
    \centering
    \includegraphics[width = 0.8\textwidth]{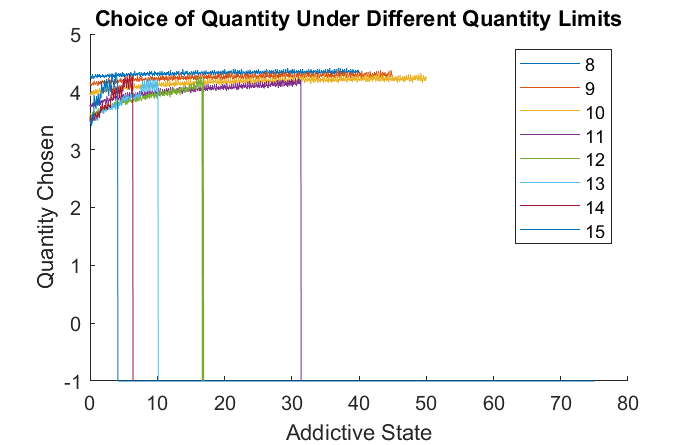}
    \caption{Policy at Different Quantity Limits}
\end{figure}

When the quantity limit decreases from $10$ to $9$ to $8$, quantity consumed at every state increases. However, that does not hold at other quantity levels: When the quantity limit decreases in the range of $15$ to $11$, the set of states in which the DM chooses to consume increases, yet the quantity consumed decreases. 

In this case, there is a ``last hurrah'' effect: \cite{valenzky_2023_the} describes how often times, drinkers think that it is fine to let loose one last time before entering rehabilitation. If the problem is going to be solved tomorrow anyways, then why not drink again today? In this particular model specification, this effect dominates the disutility from a higher addictive state, leading to consumption close to the myopically optimal level right before rehabilitation. 

\section{Conclusion}

Consumption of addictive goods is a pressing concern that the economics literature does not have a good answer for. In particular, the impact on limits of how much of an addictive substance one can consume at a time is severely understudied, yet policy regulations in the real world often takes this form. This paper takes a first step at analyzing such quantity limits. 

Our main result shows that if a quantity limit does not infringe on anyone's immediate optimal level of consumption, it can only increase welfare. For instance, if the quantity limit is set at the quantity where individuals start overdosing, society can be quite confident that no one's optimal level of consumption is above that.

Further research can extend results in several directions. First is a sharper characterization of what the optimal quantity limit is. In general, different quantity limits can have differing effects on people in different addictive states: Looser limits might be more beneficial if there is a large mass of people at low addictive states, while sharper limits might be more beneficial if there is a large mass of people at high addictive states. Finding ways to aggregate welfare over a while population would produce interesting (and more applicable) results. \cite{bernheim_2004_addiction} pilot this analysis, considering a large economy of many short-lived DM's that transition through different addictive states. Second, if it is indeed the case that sharper quantity limits lead to increased rational consumption, quantity limits may also help sellers of such goods. Characterizing when a sharper quantity limit both increases welfare and producer surplus would be a useful policy tool. In these cases, forcing sellers to commit to only supplying a certain amount at most would benefit both sides of the market, providing another implementation mechanism for quantity limits. Finally, compulsive consumption may not respect the quantity limit (for instance, people still get arrested for public intoxication or possessing more than the legal amount of marijuana). To account for this, instead of consuming $\overline{Q}$ in the triggered state, DMs may instead consume some $\hat{q}(s)$, which could potentially be above $\overline{Q}$. For instance, someone who has a long history of consumption may build up a tolerance for the addictive good, leading to larger and larger amounts needed to achieve the same previous levels of happiness. Enriching the model in these dimensions would help garner a better understanding of how addiction works, and in turn, spur more effective policy responses.

\newpage

\bibliography{cites.bib}

\newpage

\appendix
\section*{Appendix: Omitted Proofs}

\subsection*{PROOF OF PROPOSITION \ref{value}}

\begin{proof}
We will prove the proposition through a series of Lemmas. Our first result establishes that value iteration maps continuous functions to continuous functions:

\begin{lemma}\label{cont}
    If $W$ is continuous in $s$ and $Q$, then so is $TW$.
\end{lemma}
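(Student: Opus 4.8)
The plan is to recognize $TW$ as the pointwise maximum of two functions of $(s,Q)$ and to establish continuity of the more delicate one via Berge's Maximum Theorem. First I would record the bookkeeping fact that the arguments fed into $W$ never leave $\mcal D$: for $(s,Q) \in \mcal D$ and $q \in [0,Q]$ we have $s \le Q/(1-\lambda)$, hence $\lambda s + q \le \lambda Q/(1-\lambda) + Q = Q/(1-\lambda)$, so $(\lambda s + q, Q) \in \mcal D$; the points $(\lambda s + Q, Q)$ and $(\lambda s, Q)$ are the special cases $q = Q$ and $q = 0$. Thus ``$W$ continuous on $\mcal D$'' can legitimately be invoked for every composition below — this is exactly the payoff of having folded $Q$ into the state and restricted to the compact domain $\mcal D$.

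Next I would define, on the set $\Gamma = \{(s,Q,q) : (s,Q) \in \mcal D,\ q \in [0,Q]\}$, the joint objective
\[
F(s,Q,q) = (1-\mcal P(s))\big[u(s,q) + \delta W(\lambda s + q, Q)\big] + \mcal P(s)\big[u(s,Q) + \delta W(\lambda s + Q, Q)\big].
\]
Continuity of $F$ on $\Gamma$ follows from: continuity of $u(s,q) = u(s, y(s)-pq, q, E)$ and of $u(s,Q) = u(s, y(s)-pQ, Q, E)$ (using that $u$ and $y$ are continuous), continuity of $\mcal P$, continuity of $W$ by hypothesis, and continuity of $(s,q) \mapsto \lambda s + q$. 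The feasibility correspondence $(s,Q) \mapsto [0,Q]$ is nonempty- and compact-valued and continuous (it is a ``shrinking interval,'' so both upper and lower hemicontinuous). Berge's Maximum Theorem then gives that $g(s,Q) := \max_{q \in [0,Q]} F(s,Q,q)$ is continuous on $\mcal D$ (and, as a byproduct, $q^*$ is an upper hemicontinuous correspondence, though we do not need that here).

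Finally I would observe that the rehabilitation branch $h(s,Q) := u(s, R) + \delta W(\lambda s, Q)$ is continuous on $\mcal D$ — here $u(s,R) = u(s, y(s)-r(s), 0, R)$ is continuous in $s$ because $u$, $y$, and $r$ are continuous, and $W(\lambda s, \cdot)$ is continuous — so that $(TW)(s,Q) = \max\{g(s,Q),\, h(s,Q)\}$ is continuous as the maximum of two continuous functions, which is the claim.

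There is no deep obstacle: the argument is the standard Berge-plus-maximum-of-continuous-functions template, and the only care-points are (i) the domain check in the first paragraph, and (ii) the reliance on continuity of $\mcal P$, which is used but not among the listed properties of $\mcal P$ — I would either fold it into the maintained assumptions on $M$ and $\mu$ (e.g. $M$ continuous in $\omega$ together with $\mu$ atomless, or more directly $\mu(\mcal C(E,\cdot))$ having no jumps) or add it explicitly as a hypothesis of the lemma.
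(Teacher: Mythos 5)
Your proof follows essentially the same route as the paper's: decompose $TW$ as the maximum of the rehabilitation branch and a Berge-Maximum-Theorem term over the correspondence $(s,Q)\mapsto[0,Q]$, then use that the maximum of continuous functions is continuous. Your added care-points (the check that all arguments stay in $\mcal D$, and flagging that continuity of $\mcal P$ is used but not listed among its stated properties) are both valid and in fact tighten up details the paper's proof leaves implicit.
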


\begin{proof}
    The maximum of continuous functions is continuous and     
    $$u(s, R) + \delta W(\lambda s, Q)$$
    is clearly continuous by preceding assumptions, so it suffices to show that 
    \begin{equation*}
        \begin{split}
        &\max_{q \in [0,Q]}\Big\{(1-\mcal P(s))[u(s,q) + \delta W(\lambda s + q, Q)] +\mcal P(s)[u(s,Q) + \delta W(\lambda s + Q, Q)]\Big\}
        \end{split}
    \end{equation*}
    is continuous. Let this function be $h(s, Q)$.

    Let $X = \mcal D$ and $Y = \mathbb{R}$. Let $\Gamma: X \to Y$ be the correspondence defined by $\Gamma(s, Q) = [0, Q]$. Then, $\Gamma$ is compact-valued and continuous. Let
    \begin{equation*}
        \begin{split}
            f(s, Q, q) =& (1-\mcal P(s))[u(s,q) + \delta W(\lambda s + q, Q)] + \mcal P(s)[u(s, Q) + \delta W(\lambda s + Q, Q)].
        \end{split}
    \end{equation*}
    By Berge’s Theorem of the Maximum, we have that
    $$h(s, Q) = \max_{q \in [0, Q]} f(s, Q, q) = \max_{q \in \Gamma(s, Q)} f(s, Q, q)$$
    is continuous as desired.
\end{proof}

Our next result establishes that if utility is decreasing in addictive state at a proposed value function, then utility is still decreasing in state after applying value iteration to it:

\begin{lemma}\label{noninc}
    If $W(s, Q)$ is non-increasing in $s$ for any $Q$, then $TW$ is non-increasing in $s$ for any $Q$ as well.
\end{lemma}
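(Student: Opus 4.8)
The plan is to fix the quantity limit $Q$ and show separately that each of the two expressions inside the outer maximum defining $(TW)(\cdot, Q)$ is non-increasing in $s$; since the pointwise maximum of two non-increasing functions is non-increasing, this yields the claim. Throughout I will use the standing hypothesis (under which Lemma~\ref{noninc} is invoked in the proof of part~(2) of Proposition~\ref{value}) that $u$ is decreasing in $s$, together with the facts that $\mcal P$ is non-decreasing in $s$ on the relevant range (the third property in the Lemma on properties of $\mcal P$) and that $\lambda \in (0,1)$. I will also record once, before applying $W$ at shifted arguments, that $(\lambda s + q, Q) \in \mcal D$ whenever $(s,Q) \in \mcal D$ and $q \in [0,Q]$ — this is exactly consequence~(1) of the state dynamics — so all the expressions below are well defined. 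The rehabilitation term $u(s,R) + \delta W(\lambda s, Q)$ is handled immediately: $u(\cdot, R)$ is non-increasing by hypothesis, and $s \mapsto W(\lambda s, Q)$ is non-increasing because $\lambda > 0$ and $W$ is non-increasing in its first argument, so the sum is non-increasing.

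For the exposure term, the key simplification is to write $\phi(s, q) := u(s,q) + \delta W(\lambda s + q, Q)$ and observe that the ``triggered'' continuation payoff $u(s,Q) + \delta W(\lambda s + Q, Q)$ is precisely $\phi(s,Q)$. Since the weights do not depend on $q$ and $1 - \mcal P(s) \ge 0$, the inner maximum equals
\[
g(s, Q) = (1 - \mcal P(s))\,\Phi(s) + \mcal P(s)\,\phi(s,Q), \qquad \text{where } \Phi(s) := \sup_{q \in [0,Q]} \phi(s,q).
\]
Three observations then finish it: (i) for each fixed $q$, $\phi(\cdot, q)$ is non-increasing, being the sum of the non-increasing maps $u(\cdot, q)$ and $W(\lambda\,\cdot + q, Q)$, hence $\Phi = \sup_q \phi(\cdot, q)$ is non-increasing and so is $\phi(\cdot, Q)$; (ii) $\Phi(s) \ge \phi(s, Q)$ for every $s$, since $Q \in [0,Q]$; (iii) $\mcal P$ is non-decreasing. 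For $s_1 < s_2$ these give
\[
g(s_1, Q) \ge (1 - \mcal P(s_1))\Phi(s_2) + \mcal P(s_1)\phi(s_2, Q) \ge (1 - \mcal P(s_2))\Phi(s_2) + \mcal P(s_2)\phi(s_2, Q) = g(s_2, Q),
\]
where the first inequality uses (i) with the non-negative weights and the second uses that the difference of the two middle quantities equals $(\mcal P(s_2) - \mcal P(s_1))(\Phi(s_2) - \phi(s_2, Q)) \ge 0$ by (ii) and (iii).

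The only step requiring care — and the only place the structure of the problem genuinely enters — is the second inequality above: increasing $s$ not only lowers both $\Phi$ and $\phi(\cdot, Q)$, it also transfers probability weight from the (larger) optimized payoff $\Phi$ onto the (smaller) forced-consumption payoff $\phi(\cdot, Q)$, and one must verify this transfer does not reverse the monotonicity. It does not, precisely because $q = Q$ is always a feasible consumption choice, so $\Phi \ge \phi(\cdot, Q)$ pointwise; combined with $\mcal P$ non-decreasing, shifting weight toward the smaller term can only decrease the value. Note that this argument avoids any appeal to attainment of the supremum, so no continuity of $W$ is needed. Once $g$ is shown non-increasing, combining it with the rehabilitation term via $\max$ completes the proof.
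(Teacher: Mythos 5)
Your proof is correct and follows essentially the same route as the paper's: both decompose the exposure term into a convex combination of the optimized payoff and the triggered payoff, use that $q=Q$ is feasible so the former dominates the latter, and combine monotonicity of $\mcal P$ with monotonicity of $u$ and $W$ in $s$. The only differences are cosmetic — you apply the two inequality steps in the reverse order and use a supremum to sidestep attainment, which slightly cleans up the argument.
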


\begin{proof}
    Suppose $W$ is non-increasing in $s$. Then, 
    $$u(s,R) + \delta W(\lambda s, Q)$$
    is non-increasing in $s$ so it suffices to show that 
    \begin{equation*}
        \begin{split}
            &\max_{q \in [0,Q]}\Big\{(1-\mcal P(s))[u(s,q) + \delta W(\lambda s + q, Q)] +\mcal P(s)[u(s,Q) + \delta W(\lambda s + Q, Q)]\Big\}
        \end{split}
    \end{equation*}
    is non-increasing in $s$. The above is equivalent to 
    \begin{equation*}
        \begin{split}
            &(1-\mcal P(s))\max_{q \in [0,Q]}\Big\{(u(s,q) + \delta W(\lambda s + q, Q)\Big\} +\mcal P(s)[u(s,Q) + \delta W(\lambda s + Q, Q)].
        \end{split}
    \end{equation*}
    As $q = Q$ is in the choice set, 
    $$\max_{q \in [0,Q]}\Big\{(u(s,q) + \delta W(\lambda s + q, Q)\Big\} \geq u(s,Q) + \delta W(\lambda s + Q, Q).$$
    Thus, if $s' > s$, we have that $\mcal P(E,s') \geq \mcal P(E, s)$ so
    \begin{equation*}
        \begin{split}
            &(1-\mcal P(s))\max_{q \in [0,Q]}\Big\{(u(s,q) + \delta W(\lambda s + q, Q)\Big\} +\mcal P(s)[u(s,Q) + \delta W(\lambda s + Q, Q, Q)] \\
            \geq &(1-\mcal P(s'))\max_{q \in [0,Q]}\Big\{(u(s,q) + \delta W(\lambda s + q, Q)\Big\} +\mcal P(s')[u(s,Q) + \delta W(\lambda s + Q, Q)]. \\
        \end{split}
    \end{equation*}
    As $u$ and $W$ are non-increasing in $s$, we also have that 
    \begin{equation*}
        \begin{split}
            &(1-\mcal P(s'))\max_{q \in [0,Q]}\Big\{(u(s,q) + \delta W(\lambda s + q, Q)\Big\} +\mcal P(s')[u(s,Q) + \delta W(\lambda s + Q, Q)] \\
            \geq &(1-\mcal P(E,s'))\max_{q \in [0,Q]}\Big\{(u(s',q) + \delta W(\lambda s' + q, Q)\Big\} +\mcal P(s')[u(s', Q) + \delta W(\lambda s' + Q, Q)]. \\
        \end{split}
    \end{equation*}
\end{proof}

A similar result holds to show (3): If a proposed value function is increasing in $Q$ for $Q \geq \overline{q}$, then performing value iteration preserves that property:

\begin{lemma}\label{limit_good}
    Suppose there exists $\overline{q}$ such that $u(s, y(s)-pq, q,E)$ is decreasing in $q$ for all $q \geq \overline{q}$ and $s$. If $W(s,Q)$ is non-increasing in $s$ for all $Q$ and non-increasing in $q$ for all $q \geq \overline{q}$ and $s$, then $(TW)(s,Q)$ is non-increasing in $s$ for all $Q$ and non-increasing in $Q$ for all $Q \geq \overline{q}$ and $s$.
\end{lemma}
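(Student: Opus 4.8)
The plan is to separate the two assertions in the conclusion. The ``non-increasing in $s$'' claim needs nothing new: the hypothesis already supplies that $W$ is non-increasing in $s$, so Lemma~\ref{noninc} directly yields that $TW$ is non-increasing in $s$ for every $Q$. All the real work goes into the new claim, that $TW$ is non-increasing in $Q$ on the region $\{Q\ge\overline q\}$.

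To prove that, I would fix $\overline q\le Q<Q'$ and a state $s$ and show $(TW)(s,Q)\ge(TW)(s,Q')$. Since the weight $\mcal P(s)$ inside the exposure branch does not depend on $q$, I would write $h(s,Q)=(1-\mcal P(s))\,g(s,Q)+\mcal P(s)\big[u(s,Q)+\delta W(\lambda s+Q,Q)\big]$, where $g(s,Q)=\max_{q\in[0,Q]}\big[u(s,q)+\delta W(\lambda s+q,Q)\big]$, so that $(TW)(s,Q)=\max\{g\text{-part plus compulsive part},\ u(s,R)+\delta W(\lambda s,Q)\}$. Because the weights $1-\mcal P(s)$ and $\mcal P(s)$ are nonnegative and do not vary with $Q$, $h(s,\cdot)$ is non-increasing as soon as $g(s,\cdot)$ and the compulsive term are; and then $(TW)(s,\cdot)$, being the pointwise max of $h(s,\cdot)$ and the rehabilitation value, is non-increasing as soon as all three of these are. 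So it suffices to check that the rehabilitation value, the compulsive term $u(s,Q)+\delta W(\lambda s+Q,Q)$, and $g(s,Q)$ are each non-increasing in $Q$ on $\{Q\ge\overline q\}$. The first is immediate since $W(\lambda s,\cdot)$ is non-increasing there by hypothesis. The second is immediate because $u(s,\cdot)$ is decreasing on $[\overline q,\infty)$ by hypothesis, while $W(\lambda s+Q,Q)$ decreases in $Q$ since $W$ is non-increasing both in its state argument (which rises with $Q$) and in its limit argument.

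The one nontrivial step --- and the main obstacle --- is $g(s,Q)\ge g(s,Q')$, since a looser limit enlarges the feasible set, $[0,Q]\subseteq[0,Q']$, so the $Q'$-optimal quantity need not be feasible at $Q$. I would let $q'$ attain $g(s,Q')$ and split on cases. If $q'\le Q$, then $q'$ is feasible at limit $Q$, so $g(s,Q)\ge u(s,q')+\delta W(\lambda s+q',Q)\ge u(s,q')+\delta W(\lambda s+q',Q')=g(s,Q')$, the middle step being monotonicity of $W$ in its limit argument (both $Q,Q'\ge\overline q$). If $q'>Q$, then $q'>Q\ge\overline q$, so $u(s,q')\le u(s,Q)$ because $u$ is decreasing past $\overline q$, and $W(\lambda s+q',Q')\le W(\lambda s+Q,Q')\le W(\lambda s+Q,Q)$ by applying monotonicity of $W$ first in its state argument ($\lambda s+q'>\lambda s+Q$) and then in its limit argument ($Q'>Q\ge\overline q$); hence $g(s,Q')=u(s,q')+\delta W(\lambda s+q',Q')\le u(s,Q)+\delta W(\lambda s+Q,Q)\le g(s,Q)$, the last inequality just evaluating $g(s,Q)$ at $q=Q$. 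Assembling the three blocks gives $h(s,Q)\ge h(s,Q')$ and therefore $(TW)(s,Q)\ge(TW)(s,Q')$. The only subtlety to watch is that each comparison of $W$ must be carried out one coordinate at a time, since $W$ is assumed monotone in each of its two arguments separately, not jointly.
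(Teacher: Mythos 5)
Your proof is correct and follows essentially the same decomposition as the paper: invoke Lemma \ref{noninc} for monotonicity in $s$, then show the rehabilitation term, the compulsive term, and the $\max_{q\in[0,Q]}$ term are each non-increasing in $Q$ on $\{Q\ge\overline q\}$. The only (harmless) difference is in the last term: the paper first notes that the optimizer always lies in $[0,\overline q]$ and is therefore feasible under the tighter limit, while you split into cases according to whether the $Q'$-optimizer is feasible at $Q$ and bound the infeasible case directly using the monotonicity of $u$ and $W$ past $\overline q$ — both arguments are valid.
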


\begin{proof}
    Lemma \ref{noninc} gives that $(TW)(s,Q)$ is non-increasing in $s$ for all $Q$. 
    
    By assumption of $W$ being non-increasing in $Q$ for $Q \geq \underline{Q}$ and $u(s,y(s),0, R)$ being independent of $Q$, we have that 
    $$u(s, R) + \delta W(\lambda s, Q)$$
    is non-increasing in $Q$ for $Q \geq \underline{Q}$. 

    Similarly, by assumption of $W$ being non-increasing in $Q$ for $Q \geq \underline{Q}$, $W$ being non-increasing in $s$, and $Q \geq \underline{Q}$ combined with $u(s,q)$ being non-increasing in $q$ for $q \geq \underline{Q}$, we have that
    $$u(s,Q) + \delta W(\lambda s+Q, Q)$$
    is non-increasing in $Q$ as well.

    As such, it suffices to show that 
    $$\max_{q \in [0,Q]}\Big\{(u(s,q) + \delta W(\lambda s + q, Q)\Big\}$$
    is non-increasing in $Q$. Fix $s$ and $\underline{Q} \leq Q < Q'$ with $(s,Q), (s,Q') \in \mcal D$ noting that if no such $Q, Q'$ exist, then the lemma is vacuously true. Recall that
    $$q^*(s, Q) = \argmax_{q \in [0,Q]}\Big\{(u(s,q) + \delta W(\lambda s + q, Q)\Big\}.$$
    For any $Q \geq \underline Q$, we have that $q^*(s, Q) \leq \underline Q$ since both $u$ and $W$ are decreasing past $\underline Q$. Then,
    \[
    \begin{array}{>{\displaystyle}r>{\displaystyle}l>{\displaystyle}l}
        &\max_{q \in [0,Q]}\Big\{(u(s,q) + \delta W(\lambda s + q, Q)\Big\} &  \\
        =& u(s,q^*(s, Q) + \delta W(\lambda s + q^*(s, Q), Q) & \text{by definition of } q^*\\
        \geq& u(s,q^*(s, Q') + \delta W(\lambda s + q^*(s, Q'), Q) & q^*(s, Q') \in [0, \underline Q] \subseteq [0, Q]\\
        \geq& u(s,q^*(s, Q') + \delta W(\lambda s + q^*(s, Q'), Q') & W \text{ non-increasing in }Q \\
        =& \max_{q \in [0,Q']}\Big\{(u(s,q) + \delta W(\lambda s + q, Q')\Big\} & 
    \end{array}
    \]
    as desired.
\end{proof}

\begin{lemma}\label{concave}
    Suppose $u$ is (strictly) concave in $s, q$ and $W$ is concave. Then, $TW$ is (strictly) concave in $s$.
\end{lemma}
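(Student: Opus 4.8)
The plan is to show that the value-iteration operator $T$ maps the set of functions that are concave in $s$ (for each fixed $Q$) into itself; since $V = TV$, the value function then inherits concavity, and the strict case follows by carrying strictness through each step. Fix $Q$ — concavity is asserted only in $s$ — and write the exposure-branch ``stage payoff plus continuation'' as
$$g(s,q) := u(s,q) + \delta\, W(\lambda s + q, Q).$$
The first step is joint concavity of $g$ in $(s,q)$: $u(s,q)$ is (strictly) concave in $(s,q)$ by hypothesis, the map $(s,q) \mapsto (\lambda s + q, Q)$ is affine, so $W(\lambda s + q, Q)$ is concave in $(s,q)$ because $W$ is concave; a sum of a (strictly) concave function and a concave function is (strictly) concave. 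This is precisely where the lemma's joint-concavity hypothesis on $u$ — stronger than concavity in $s$ alone — is needed.

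The second step is the standard partial-maximization result: since the feasible set $[0,Q]$ is a fixed compact convex set once $Q$ is fixed and $g$ is (strictly) concave jointly, $G(s) := \max_{q \in [0,Q]} g(s,q)$ is (strictly) concave in $s$. Moreover $b(s) := u(s,Q) + \delta W(\lambda s + Q, Q) = g(s,Q)$ is the restriction of $g$ to the slice $q = Q$, hence (strictly) concave in $s$, and $c(s) := u(s,R) + \delta W(\lambda s, Q)$ is concave in $s$ provided the rehabilitation stage payoff is concave in $s$ — a hypothesis I would make explicit, since the lemma as stated only posits concavity of the exposure payoff $u(s,q)$ — again using that $s \mapsto \lambda s$ is affine. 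In this notation,
$$(TW)(s,Q) = \max\Big\{\,(1-\mcal P(s))\,G(s) + \mcal P(s)\,b(s),\ \ c(s)\,\Big\}.$$

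The remaining step — reading concavity in $s$ off this last expression — is where I expect the real work to be, and there are two places it can break. First, the exposure branch equals $G(s) - \mcal P(s)\big(G(s) - b(s)\big)$, a combination with an $s$-dependent weight $\mcal P(s)$ that is assumed only increasing; although $G - b \ge 0$ (since $q = Q$ is feasible), $G - b$ is a difference of concave functions, so $\mcal P(s)\big(G(s)-b(s)\big)$ need not be convex, and concavity of the exposure branch is not automatic — pushing this through appears to need extra structure on $\mcal P$ (affinity would suffice) or sign information on the cross partial of $g$ together with a sharper estimate. Second, even granting a concave exposure branch, $TW$ is the pointwise maximum of it and the concave branch $c(s)$, and a maximum of concave functions is in general not concave: at a state where the two branches cross, concavity of the maximum would require the branch active on the left to be the steeper one, whereas the crossing configuration forces the reverse slope inequality. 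I would therefore expect to need either a restriction to the regime in which exposure is optimal at every state (cf. Proposition \ref{extreme}) or an argument that the exposure and rehabilitation values never cross in the relevant range; this outer maximum is the main obstacle, the first two steps being routine.
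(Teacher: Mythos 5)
Your first two steps (joint concavity of $g(s,q)=u(s,q)+\delta W(\lambda s+q,Q)$ via the affine map $(s,q)\mapsto(\lambda s+q,Q)$, then partial maximization over the fixed convex set $[0,Q]$) are the standard route and are essentially what the paper does, except the paper argues it by hand: it bounds the value at $\gamma s+(1-\gamma)s'$ below by plugging in the feasible point $\gamma q^*(s,Q)+(1-\gamma)q^*(s',Q)$ and then invoking concavity of $u$ and $W$. Where your proposal stops short, you have correctly identified the two places the argument genuinely breaks, and you should know that the paper's own proof does not overcome either of them. First, the paper's opening equality silently identifies $TW$ at the midpoint with the exposure branch, never addressing the outer $\max$ against the rehabilitation branch $c(s)=u(s,R)+\delta W(\lambda s,Q)$; as you note, a pointwise maximum of two concave functions is not concave, and concavity of $c$ itself is an unstated extra hypothesis on $y$ and $r$. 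Second, the paper's final displayed expression still carries the weights $\mcal P(\gamma s+(1-\gamma)s')$ on the terms evaluated at $s$ and at $s'$, so it is \emph{not} $\gamma\,TW(s,Q)+(1-\gamma)\,TW(s',Q)$; the chain of inequalities terminates one step before the conclusion, and closing that step is exactly the issue you flag with the $s$-dependent weight $\mcal P(s)$ multiplying the nonnegative, non-convex difference $G(s)-b(s)$.

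So your diagnosis is accurate rather than a defect of your own argument: the lemma as stated needs additional structure to be true by these methods --- for instance $\mcal P$ affine (or constant) in $s$ together with concavity of the rehabilitation payoff and a reason the two branches do not cross (e.g.\ the regime of Proposition \ref{extreme} where exposure is always chosen), or a direct argument exploiting monotonicity of $\mcal P$ and the sign of $G-b$ that neither you nor the paper supplies. If you want a clean fix, the most economical one is to prove concavity of the exposure value $G(s)-\mcal P(s)(G(s)-b(s))$ under an explicit assumption (such as $\mcal P$ affine and $G-b$ monotone), state concavity of $c$ as a hypothesis, and then restrict the claim to the region where a single branch is active.
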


\begin{proof}
    We have that for any $Q$,
    \begin{equation*}
        \begin{split}
            & TW(\gamma s + (1-\gamma)s', Q) \\
            = & (1-\mcal P(\gamma s + (1-\gamma)s')) \Big[ u(\gamma s + (1-\gamma)s', q^*(\gamma s + (1-\gamma)s'), Q)) \\
            & + \delta W(\lambda(\gamma s + (1-\gamma)s') + q^*(\gamma s + (1-\gamma)s'), Q)\Big] \\
            & + \mcal P(\gamma s + (1-\gamma)s') \left[ u(\gamma s + (1-\gamma)s', Q) + \delta W(\lambda(\gamma s + (1-\gamma)s')+Q, Q)\right] \\
            \geq & (1-\mcal P(\gamma s + (1-\gamma)s')) \Big[ u(\gamma s + (1-\gamma)s', \gamma q^*(s, Q) + (1-\gamma) q^*(s', Q)) \\
            & + \delta W(\lambda(\gamma s + (1-\gamma)s') + \gamma q^*(s, Q) + (1-\gamma) q^*(s', Q), Q)\Big] \\
            & + \mcal P(\gamma s + (1-\gamma)s') \left[ u(\gamma s + (1-\gamma)s', Q) + \delta W(\lambda(\gamma s + (1-\gamma)s')+\gamma Q + (1-\gamma) Q, Q)\right] \\
            \geq & (1-\mcal P(\gamma s + (1-\gamma)s')) \Big[ \gamma u(s, q^*(s, Q)) + (1-\gamma) u(s', q^*(s', Q)) \\
            & + \delta (\gamma W(\lambda s + q^*(s, Q), Q) + (1-\gamma)W(\lambda s' + q^*(s', Q), Q))\Big] \\
            & + \mcal P(\gamma s + (1-\gamma)s') \left[ \gamma u( s, Q) +  (1-\gamma)u(s', Q) + \delta(\gamma W(\lambda s + Q, Q) + (1-\gamma)W(\lambda s' + Q, Q))\right] \\
            = & \gamma \Big\{(1-\mcal P(\gamma s + (1-\gamma)s'))[u(s, q^*(s, Q)) + \delta W(\lambda s + q^*(s, Q), Q)]  \\
            & + \mcal P(\gamma s + (1-\gamma)s')[u(s, Q) + \delta W(\lambda s + Q, Q)]\Big\} \\
            & + (1-\gamma) \Big\{(1-\mcal P(\gamma s + (1-\gamma)s'))[u(s', q^*(s', Q)) + \delta W(\lambda s' + q^*(s', Q), Q)]  \\
            & + \mcal P(\gamma s' + (1-\gamma)s')[u(s', Q) + \delta W(\lambda s' + Q, Q)]\Big\}
        \end{split}
    \end{equation*}
\end{proof}

Finally, we show that underlying space of possible value functions is complete and value iteration is a contraction mapping to guarantee that value iteration converges to a unique fixed point:

\begin{lemma}\label{complete}
    The space of $L_\infty$-uniformly bounded functions $W: \mcal D \to \mathbb{R}$ that have any subset of the following properties:
    \begin{enumerate}
        \item non-increasing in $s$;
        \item non-increasing in $Q$ for all $Q \geq \overline{q}$ and all $s$;
        \item concave;
    \end{enumerate}
    is complete.
\end{lemma}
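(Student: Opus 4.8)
The plan is to exhibit this space as a closed subset of a Banach space and then invoke the elementary fact that a closed subset of a complete metric space is complete. Write $B(\mcal D)$ for the set of all bounded functions $W : \mcal D \to \mathbb{R}$ with the supremum norm $\|W\|_\infty = \sup_{(s,Q) \in \mcal D}|W(s,Q)|$; it is standard that $(B(\mcal D), \|\cdot\|_\infty)$ is a Banach space, and in particular convergence in this norm implies pointwise convergence. The space in the statement is the subset of $B(\mcal D)$ of functions satisfying some chosen subcollection of (1)--(3) (and, if ``$L_\infty$-uniformly bounded'' is read as a common bound $K$, also lying in the closed ball of radius $K$, which affects nothing since that ball is itself closed). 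Because an intersection of closed sets is closed, it suffices to show that each of (1), (2), (3) individually carves out a closed subset of $B(\mcal D)$.

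First I would handle the monotonicity properties. Suppose $W_n \to W$ in $\|\cdot\|_\infty$ with each $W_n$ non-increasing in $s$. For any $(s,Q),(s',Q) \in \mcal D$ with $s \le s'$ we have $W_n(s,Q) \ge W_n(s',Q)$ for every $n$; letting $n \to \infty$ and using pointwise convergence, the weak inequality is preserved, so $W(s,Q) \ge W(s',Q)$. The argument for ``non-increasing in $Q$ on $\{Q \ge \overline{q}\}$'' is word-for-word the same. For concavity I would first record that $\mcal D = \{(s,Q) : Q \ge 0,\ Q \le \overline{Q},\ s \ge 0,\ s \le Q/(1-\lambda)\}$ is an intersection of half-planes (the last constraint being linear, as $\lambda \in (0,1)$), hence convex, so that concavity on $\mcal D$ is well posed. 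Then for $x, x' \in \mcal D$ and $\gamma \in [0,1]$, passing to the limit in $W_n(\gamma x + (1-\gamma)x') \ge \gamma W_n(x) + (1-\gamma)W_n(x')$ yields concavity of $W$. (The identical limiting argument shows that continuity on the compact set $\mcal D$ is also a closed property, in case one wants to carry continuity through the same fixed-point machinery.)

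Putting this together, for any subcollection $\scr S$ of the three properties the corresponding set is a finite intersection of closed subsets of the complete space $B(\mcal D)$, hence closed, hence complete; it is nonempty since every constant function belongs to it. This is exactly what is needed so that, combined with Lemmas \ref{noninc}, \ref{limit_good}, and \ref{concave} (value iteration preserves the relevant properties) together with the discounting/contraction argument for $T$, the Banach fixed point theorem delivers a unique fixed point $V = TV$ inheriting each property, completing Proposition \ref{value}.

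I do not expect a genuine obstacle. The only two points that need a moment of care are (i) checking that $\mcal D$ is convex, so that property (3) is meaningful at all, and (ii) remembering that limits preserve only \emph{weak} inequalities --- which is precisely why the lemma is stated for ``concave'' and ``non-increasing'' and not their strict versions; the strict concavity asserted in Proposition \ref{value}(4) has to be extracted directly from the strict concavity of $u$ inside the relevant iteration step (as in Lemma \ref{concave}) rather than from any closedness argument.
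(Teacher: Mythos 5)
Your proposal is correct and follows essentially the same route as the paper's proof: realize the space as a finite intersection of closed subsets of the Banach space of bounded functions under the sup norm, and conclude completeness from closedness. The paper simply asserts that each of (1)--(3) defines a closed set, whereas you supply the (correct) limiting arguments and the check that $\mcal D$ is convex so that concavity is well posed.
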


\begin{proof}
    The space of functions $W: \mcal D \to \mathbb{R}$ that are bounded is complete in the sup norm. Then, the set of functions that satisfy condition (1), (2), or (3) is closed. Intersections of closed sets are closed, which implies that the set of functions that satisfy any subset of these conditions is closed. Finally, closed subspaces of complete spaces are complete. 
\end{proof}

\begin{lemma}
    Value iteration is a contraction mapping.
\end{lemma}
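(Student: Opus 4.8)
The plan is to verify Blackwell's sufficient conditions for a contraction. Concretely, I would show that $T$ maps the space of bounded functions $W \colon \mcal D \to \mathbb{R}$ into itself and satisfies (i) monotonicity and (ii) discounting, so that $T$ is a contraction of modulus $\delta \in (0,1)$ in the sup norm; combined with Lemma \ref{complete} and the Banach fixed point theorem, this yields that $V$ is the \emph{unique} fixed point inside each of the closed subspaces identified above, so $V$ inherits every property that value iteration was shown to preserve in Lemmas \ref{cont}, \ref{noninc}, \ref{limit_good}, and \ref{concave}, which is exactly what is needed to finish Proposition \ref{value}.

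\textbf{Well-definedness.} Since $u$ is bounded on the relevant compact domain and $W$ is bounded, the bracketed expressions appearing in $(TW)(s,Q)$ are bounded; the inner maximum is taken over the compact set $[0,Q]$ and the outer maximum over two real numbers, so both are attained, and $TW$ is again a bounded function on $\mcal D$. Thus $T$ acts on the complete metric space of the previous lemma.

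\textbf{Monotonicity.} Suppose $W \leq W'$ pointwise on $\mcal D$. In the definition of $(TW)(s,Q)$ every occurrence of $W$ enters with one of the nonnegative coefficients $\delta(1-\mcal P(s))$, $\delta\mcal P(s)$, or $\delta$, added to terms that do not involve $W$. Replacing $W$ by the larger $W'$ therefore weakly increases the objective inside the inner maximum for every $q \in [0,Q]$, hence weakly increases the inner maximum, hence weakly increases each of the two arguments of the outer maximum (using that $\max$ is monotone), so $TW \leq TW'$.

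\textbf{Discounting.} For a constant $c \geq 0$, substituting $W + c$ for $W$ adds $\delta(1-\mcal P(s))c + \delta\mcal P(s)c = \delta c$ to the objective of the inner maximum uniformly in $q$ (using $\mcal P(s) + (1-\mcal P(s)) = 1$), hence adds $\delta c$ to the inner maximum, and it adds $\delta c$ to the second argument $u(s,R) + \delta W(\lambda s, Q)$ of the outer maximum; therefore $T(W+c) = TW + \delta c$, which in particular gives $T(W+c) \leq TW + \delta c$. By Blackwell's theorem it follows that $\|TW - TW'\|_\infty \leq \delta\,\|W - W'\|_\infty$, so $T$ is a contraction. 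This argument is essentially routine; the only points needing a moment of care are checking that the nested maximizations preserve the pointwise inequality (monotonicity and translation-equivariance of $\max$) and confirming boundedness so that the sup norm is finite and Blackwell's setup applies.
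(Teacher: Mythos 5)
Your proof is correct and follows essentially the same route as the paper: verifying Blackwell's monotonicity and discounting conditions to conclude that $T$ is a contraction of modulus $\delta$. You in fact spell out the discounting step more carefully than the paper does — in particular, your observation that the constant enters through both the $(1-\mcal P(s))$ and $\mcal P(s)$ terms and sums to $\delta c$ fixes a small omission in the paper's displayed computation, where the $+a$ is dropped from the triggered-state term.
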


\begin{proof}
    By Blackwell’s Sufficient Conditions for a functional to be a contraction mapping, it suffice to show that 
    \begin{enumerate}
        \item $W \leq U$ implies $TW(s,Q) \leq TU(s,Q)$;
        \item There exists $\beta \in (0,1)$ such that $(T[W+a])(s,Q) \leq TW(s,Q) + \beta a$ for $a \geq 0$.
    \end{enumerate}
    The first point is immediate. Then,
    \begin{equation*}
    \begin{split}
        (T[W+a])(s,Q) = \max \Bigg\{ &\max_{q \in [0,Q]}\Big\{(1-\mcal P(s))[u(s,q) + \delta (W(\lambda s + q, Q)+a)] \\ 
        &\hspace{3em}+\mcal P(s)[u(s,Q) + \delta W(\lambda s + Q, Q)]\Big\}, \\
        &u(s, R) + \delta (W(\lambda s, Q)+a) \Bigg\} \\
        = \max \Bigg\{ &\max_{q \in [0,Q]}\Big\{(1-\mcal P(s))[u(s,q) + \delta W(\lambda s + q, Q)] \\ 
        &\hspace{3em}+\mcal P(s)[u(s,Q) + \delta W(\lambda s + Q, Q)]\Big\}, \\
        &u(s, R) + \delta W(\lambda s, Q) \Bigg\} + \delta a \\
        & \hspace{-4em} = TW (s,Q) + \delta a
    \end{split}
    \end{equation*}
    where $\delta a$ is constant and hence affects neither maximization problem. Thus, taking $\beta = \delta$ suffices.
\end{proof}

To prove the Theorem, we can start by taking $W(s,Q) = 0$ for all $s, Q$. This is continuous in $s,Q$, non-increasing in $s$ for any $Q$, and non-increasing in $q$ for any $s$. Then, $TW$ satisfies all these properties as well. Finally, the sequence $\{T^kW\}_{k=1}^\infty$ converges as the space of all functions is complete and value iteration is a contraction mapping. Furthermore, it converges to a unique value function $V$ that inherits all desired properties. 

\end{proof}

\subsection*{PROOF OF PROPOSITION \ref{quantity_limit}}
\begin{proof}
    Recall that 
    \begin{equation}\label{maximand}
        q^*(s,Q) = \argmax_{q \in [0, Q]} u(s, q) + \delta V(\lambda s+q, Q).
    \end{equation}
    We will show that strictly decreasing differences of $V$ in $s, Q$ is a sufficient condition for the overall maximand to have strictly decreasing differences in $q, Q$. Then, by the Monotone Selection Theorem of \cite{milgrom_1994_monotone}, the DM's optimal selection of $q$ is decreasing as $Q$ increases.

    By strict decreasing differences in $V$, for any $\lambda, s, q, q', Q, Q'$ with $(s, Q), (s', Q'), (\lambda s + q, Q), (\lambda s + q', Q') \in \mcal D$, we have that
    $$V(\lambda s+q',Q') - V(\lambda s+q,Q') > V(\lambda s+q',Q) - V(\lambda s+q,Q).$$
    Adding
    $$u(s, q')-u(s, q)$$
    to both sides gives that
    \begin{equation*}
        \begin{split}
            &\big(V(\lambda s+q',Q')+u(s, q')\big)-\big(V(\lambda s+q',Q')+u(s, q)\big) \\
            >&\big(V(\lambda s+q',Q)+u(s, q)\big)-\big(V(\lambda s+q',Q)+u(s, q')\big)
        \end{split}
    \end{equation*}
    Thus, the maximand in Equation \ref{maximand} has decreasing differences.
\end{proof}

\subsection*{PROOF OF PROPOSITION \ref{extreme}}

\begin{proof}
    For exposure to be chosen at all addictive states, it must be that
    \begin{equation*}
        \begin{split}
        &\max_{q \in [0,Q]}\Big\{(1-\mcal P(s))[u(s,q) + \delta V(\lambda s + q, Q)] \\ 
        &\hspace{3em}+\mcal P(s)[u(s,Q) + \delta V(\lambda s + Q, Q)]\Big\} \\
        > & u(s, y(s)-r(s), 0, R) + \delta V(\lambda s, 0)
        \end{split}
    \end{equation*}
    at all $s$ for some $Q$. Then, for any $Q$
    \begin{equation*}
        \begin{split}
        &\max_{q \in [0,Q]}\Big\{(1-\mcal P(s))[u(s,q) + \delta V(\lambda s + q, Q)] \\ 
        &\hspace{3em}+\mcal P(s)[u(s,Q) + \delta V(\lambda s + Q, Q)]\Big\} \\
        > & u(s,Q) + \delta V(\lambda s + Q, Q)
        \end{split}
    \end{equation*}
    so it suffice to show that there exists $Q$ such that
    $$u(s,Q) + \delta V(\lambda s + Q, Q) > u(s, y(s)-r(s), 0, R) + \delta V(\lambda s, 0)$$
    for all $s$.

    Now, suppose that the quantity limit is zero. Then,
    $$u(s, y(s), 0, E) + \delta V(\lambda s, 0) > u(s, y(s)-r(s), 0, R) + \delta V(\lambda s, 0)$$
    for all $s$. By continuity of $u$ in $s$ and the fact that the set of all possible addictive states $[0, \overline{Q}/(1-\lambda]$ is closed, we have that
    $$\min_s \left\{u(s, y(s), 0, E) - u(s, y(s)-r(s), 0, R)\right\} = \epsilon > 0.$$
    As such, it suffice to find $Q$ such that 
    \begin{equation}\label{cond}
        u(s,Q) + \delta V(\lambda s + Q, Q) > u(s,y(s),0, E) + \delta V(\lambda s, 0) - \epsilon.
    \end{equation}
    Let 
    $$h(s,Q) = u(s,y(s),0, E) - u(s,Q) + \delta(V(\lambda S, 0) - V(\lambda s + Q, Q)).$$
    Then, Equation \ref{cond} is equivalent to 
    $$\epsilon > h(Q,s).$$
    By continuity of $u$ and $V$, we have that $h$ is continuous in $s, Q$. As such, $h^{-1}([0,\epsilon))$ is an open set (where $[0, \epsilon)$ is the half-open interval from $0$ to $\epsilon$) in $[0, \overline{Q}/(1-\lambda] \times [0, \overline{Q}]$. As such, the complement of $h^{-1}((0,\epsilon))$ is closed.

    Towards a contradiction, suppose there does not exist $Q > 0$ such that $(s, Q) \in h^{-1}([0,\epsilon))$ for all $s$. Then, there must exist a sequence $\{s_k, Q_k\}_{k = 1}^\infty$ such that $(s_k, Q_k) \in (h^{-1}([0,\epsilon)))^C$ for all $k$ and $\lim_{k \to \infty} Q_k = 0$. Then, letting $\hat{s} = \lim_{k \to \infty} s_k$, it must be that $(\hat{s},0) \in (h^{-1}([0,\epsilon)))^C$. However, $h(\hat{s},0) = 0$, a contradiction.
\end{proof}

\end{document}